\crefname{table}{Protocol}{Protocols}
\def\be{\begin{equation}}
\def\ee{\end{equation}}
\newcommand{\eqkd}{\eps_{\rm{qkd}}}
\newcommand{\epe}{\eps_{\rm{pe}}}
\newcommand{\epa}{\eps_{\rm{pa}}}
\newtheorem{theorem}{Theorem}
\newtheorem{lemma}[theorem]{Lemma}
\begin{document}

\title{Security analysis of quantum key distribution with small block length \\and its application to quantum space communications}

\author{Charles Ci-Wen Lim}\email{charles.lim@nus.edu.sg}
\affiliation{Department of Electrical \& Computer Engineering, National University of Singapore, 4 Engineering Drive 3, Singapore 117583}
\affiliation{Centre for Quantum Technologies, National University of Singapore, 3 Science Drive 2, Singapore 117543}

\author{Feihu Xu}
\affiliation{Hefei National Laboratory for Physical Sciences at Microscale and Department of Modern Physics, University of Science and Technology of China, Hefei 230026, China}
\affiliation{Shanghai Branch, CAS Center for Excellence in Quantum Information and Quantum Physics, University of Science and Technology of China, Shanghai 201315, China}

\author{Jian-Wei Pan}
\affiliation{Hefei National Laboratory for Physical Sciences at Microscale and Department of Modern Physics, University of Science and Technology of China, Hefei 230026, China}
\affiliation{Shanghai Branch, CAS Center for Excellence in Quantum Information and Quantum Physics, University of Science and Technology of China, Shanghai 201315, China}

\author{Artur Ekert}
\affiliation{Centre for Quantum Technologies, National University of Singapore, 3 Science Drive 2, Singapore 117543}
\affiliation{Mathematical Institute, University of Oxford, Oxford, Oxford OX2 6HD, United Kingdom}

\begin{abstract}
The security of real-world quantum key distribution (QKD) critically depends on the number of data points the system can collect in a finite time interval. To date, state-of-the-art finite-key security analyses require block lengths in the order of $10^4$ bits to obtain positive secret keys. This requirement, however, can be very difficult to achieve in practice, especially in the case of entanglement-based satellite QKD, where the overall channel loss can go up to 70 dB or more. Here, we provide an improved finite-key security analysis which reduces the block length requirement by 14\% to 17\% for standard channel and protocol settings. In practical terms, this reduction could save entanglement-based satellite QKD weeks of measurement time and resources, thereby bringing space-based QKD technology closer to reality. As an application, we use the improved analysis to show that the recently reported Micius QKD satellite is capable of generating positive secret keys with a $10^{-5}$ security level.
\end{abstract}

\maketitle
{\it{Introduction.---}}Quantum key distribution (QKD) is arguably the most mature quantum technology today with full-stack systems already being deployed in real-world networks~\cite{Scarani2009,Lo2014, Diamanti2016, Feihu2020RMP}. In the research community, the security of real-world QKD systems remains a central programme and much effort has been devoted to understanding the security of side-channels and systems with finite resources. Of particular importance is the non-asymptotic security of QKD---a theoretical programme that analyzes the security of finite-length keys and the security errors of parameter estimation, error correction, and privacy amplification~\cite{ren08}. 

In fact, the need for finite-key security was recognized long ago~\cite{Mayers2001,Inamori2007,Scarani2008} but it is only in the recent decade that security proof techniques became more accessible. On this note, we refer to those based on entropic uncertainty relations~\cite{TR11}, which are known to be pretty efficient in terms of the \emph{block length}~\cite{Tomamichel2012a,Tomamichel2017}. Here, block length is defined as the size of the sifted measurement data before parameter estimation; i.e., $m=n+k$, where $n$ is the \emph{raw key} length and $k$ is the number of bits used for parameter estimation. Previously, one would need $m > 10^5$ to get a positive secret key~\cite{Scarani2008}; now, positive keys can be obtained for $m \sim 10^4$ using proof techniques based on entropic uncertainty relations, giving an order of magnitude of improvement~\cite{Tomamichel2012a,Tomamichel2017}. 

For QKD systems deployed in standard optical networks, one can readily obtain their finite-key security since the measurement data size of these systems is typically large enough~\cite{Feihu2020RMP}. However, for QKD systems deployed in challenging environments such as space~\cite{Yin2017a,Yin2017b,Takenaka2017, Yin2020}, obtaining their finite-key security can be an issue. In particular, owning to high channel loss and environmental challenges, these systems may not gather enough measurement data in short time intervals. To overcome this issue, one can either spend more time to gather data (the simplest solution) or upgrade the QKD system to achieve a faster data collection rate. These solutions are straightforward for terrestrial-based QKD systems, but the same cannot be said for space-based QKD systems. More specifically, the latter can only communicate with ground stations over very precise time windows (essentially only over good passes) and upgrading in space may not be realistic with current technology~\cite{Bedington2017}. These limitations hence suggest that space-based QKD is a major scientific challenge. Indeed, this is especially the case for entanglement-based satellite QKD (EB S-QKD), which is known to be slower than the commonly adopted prepare-and-measure setup but offers the attractive possibility of untrusted node operation.

Very recently, it has been reported that the finite-key security of EB S-QKD has been demonstrated for the first time with an improved \emph{Micius} QKD satellite~\cite{Yin2020}. There, the authors reported a positive secret key with a security parameter of  $10^{-10}$ based on the security proof technique of Ref~\cite{Tomamichel2012a}. However, the analysis there did not consider the cost of parameter estimation and hence the random sampling bound was not properly applied. As a result, this led to overly-optimistic secret key rates. Using their experimental data ($m \sim 10^{3}$) and the methods from Refs.~\cite{Tomamichel2012a,Tomamichel2017}, we found that one would actually require a larger security parameter of  $10^{-5}$ in order to generate a positive key (see below).

In view of the above, it is thus timely to develop new proof techniques that would further sharpen the finite-key security of QKD systems with small block lengths, particularly those with $m \sim10^3$ to $10^4$. To that end, we present here an improved finite-key analysis which can further reduce the block length requirement by 14\% to 17\%. While the improvement is not dramatic (in the sense by several orders of magnitude), it could already save EB S-QKD many weeks of data collection and resources. Our main theoretical contribution is a new random sampling bound (without replacement), which improves pretty significantly the statistical accuracy of the parameter estimation procedure in the small block length regime. The inequality is presented in Lemma~\ref{lem:samplingbound} and may be of independent interest to other problems. Using this refined analysis, we show that the improved Micius QKD system~\cite{Yin2020} is capable of generating positive keys with $10^{-6}$ security; with the same level of security, no positive keys were found (numerically) with the analysis from Ref.~\cite{Tomamichel2017}.

{\it{Entanglement-based QKD.---}}In order to formally present our results, let us first introduce the BBM92 QKD protocol~\cite{BBM92} in consideration. We assume that the QKD users, Alice and Bob, are given a $2m$-partite quantum state $\rho_{AB}$, which is derived from a larger $(2m+1)$-partite quantum system $ABE$. Here, the quantum system $E$ is controlled by an adversary, Eve, who is computationally unbounded. We consider a protocol in which Alice's measurements are ideal  (i.e., incompatible qubit measurements), the basis choice is uniform, and the raw key is randomly sampled from both bases~\cite{Tomamichel2017}. This protocol is essentially the same as the BBM92 QKD protocol analyzed by Shor and Preskill~\cite{Shor2000} except that less than half of the sifted bits are used for parameter estimation (via random sampling): $n$ bits for the raw key and $k=m-n$ bits for parameter estimation. Importantly, in sampling from both bases, the errors will be uniformly distributed (and hence symmetrized) between the raw key pair and the random sample used for parameter estimation. This allows us to view the parameter estimation step as a standard random sampling problem, where the goal is to infer the error rate of the raw key pair using the error rate observed in the random sample (drawn without replacement).

The protocol is characterized by the block length, $m$, and three key distillation subroutines, i.e., parameter estimation, error correction, and privacy amplification, which we denote by $\mathsf{pe}$, $\mathsf{ec}$, and $\mathsf{pa}$, respectively. The parameter estimation subroutine is parameterized by two numbers: $\mathsf{pe}=\{k,\delta\}$, where $k\leq m$ is some positive integer and $\delta \in (0,1/2)$ is the tolerated error rate. The error correction and privacy amplification subroutines are parameterized by two positive integers and one positive integer, respectively: $\mathsf{ec}=\{t,r\}$ and  $\mathsf{pa}=\{\ell\}$. The QKD protocol is described in Protocol.~\ref{protocol:bbm92} (with only the details needed to illustrate the main technical results; for a more rigorous description, see Ref.~\cite{Tomamichel2017}).

\begin{table}[t!]
\hrule 
\smallskip
\textbf{{Settings: $m$, $\mathsf{pe}=\{k,\delta\}$, $\mathsf{ec}=\{t,r\}$ and  $\mathsf{pa}=\{\ell\}$ }}  
\smallskip
\hrule
\justify
\noindent\textbf{1.~Measurement.}~Alice and Bob agree on a random binary string $\Phi\in \{0,1\}^m$ over an authenticated public channel and measure their respective quantum signals using this string. They then agree on a random sample (drawn without replacement) of size $k$ from the entire measurement data set and store them into two pairs of strings: $(X,V)$ for Alice and $(Y,W)$ for Bob. Here, $X$ and $Y$ are random strings taking values in $\{0,1\}^n$; thus $V$ and $W$ take values in $\{0,1\}^k$. Note that $m=n+k$. \newline

\noindent\textbf{2.~Parameter estimation using random sampling.}~Alice publicly sends $V$ to Bob, who then computes the error rate between $V$ and $W$, i.e., $\overline{Z}_{\rm{pe}}:=|V \oplus W|/k$.  If the error rate exceeds the tolerated error rate $\delta$, they abort the protocol. Otherwise, they proceed to the next step. This decision is stored in a binary-valued flag $F_{\rm{pe}}\in\{\checkmark, \varnothing\}$, where $\checkmark$ means successful and $\varnothing$ means abort. \newline

\noindent\textbf{3.~Error correction.}~Alice sends Bob a syndrome $T$ of length $r$ which is computed from her raw key $X$. Then Bob generates an estimate of Alice's raw key, ${X'}$, from $Y$ and $T$. To verify that the correction is successful, Alice computes a hash $H(X)$ (of length $t$) of $X$ and sends it to Bob, who then compares it with his hash $H({{X'}})$. If the hash values are different (i.e., $H(X) \not=H({X'})$), they abort the protocol; this decision is stored in $F_{\rm{ec}}\in\{\checkmark, \varnothing\}$.  \newline

\noindent\textbf{4.~Privacy Amplification.}~Alice and Bob perform randomness extraction based on two-universal hashing to extract an identical secret key pair, $S_A$ and $S_B$, each of length $\ell$, from $X$ and $X'$, respectively. \newline

\hrule
 \caption{BBM92 QKD with random sampling as adapted from Ref.~\cite{Tomamichel2017}.
  \label{protocol:bbm92}}
\end{table}

{\it{Security analysis.---}}Following standard QKD security definition~\cite{MR09,PR14}, a QKD protocol is said to be $\eqkd$-\textit{secure} if it is $\eps_1$-\textit{correct} and $\eps_2$-\textit{secret}, where $0<\eps_1+\eps_2 \leq \eqkd$. The security parameter of the protocol is thus $\eqkd$. The correctness criterion captures the maximum probability that Alice's and Bob's output keys are different, and the secrecy criterion looks at how distinguishable Alice's output key is from the ideal output key (i.e., one that is uniformly distributed and independent of Eve's side-information $E$ and all classical messages sent over the public channel). Importantly, this choice of security definition guarantees that the protocol is \emph{universally composable}~\cite{BM04,MR09,PR14}: the pair of secret keys can be safely used in any cryptographic task, e.g., for encrypting messages, that requires a perfectly secret key.

With the definitions and relevant tools in place we are now ready to analyze the finite-key security of Protocol.~\ref{protocol:bbm92}. The question which we would like to address is the following: What is the maximum value of $\ell$, the output secret key length, for a desired choice of security parameter $\eqkd$ and protocol settings? To answer this question, we first have to identify the mathematical conditions for which the protocol is $\eqkd$-secure. These conditions (for the considered BBM92 QKD protocol) are given in Ref.~\cite[Theorem 2 and Theorem 3]{Tomamichel2017}, which we restate in a concise form below:

\begin{theorem}[Adapted from Ref.~\cite{Tomamichel2017}]\label{thm:sec} For the QKD protocol described above with fixed settings, it is $\eqkd$-secure if there exists $\nu \in (0,1/2-\delta]$ satisfying
\be \label{sec_conds}
  2^{-t} + 2\epe(\nu) + \epa(\nu)\leq\eqkd,  
\ee where the error functions due to privacy amplification and parameter estimation are defined as
\begin{eqnarray*}
 \epa(\nu)&:=&\frac{1}{2}\sqrt{2^{-n[1-h_2(\delta+\nu)]+r+t+\ell}},\\
  \epe(\nu)&:=&\exp{\left(-\frac{nk^2\nu^2}{m(k+1)}\right)},
\end{eqnarray*}respectively, and with $h_2(x):=-x\log x - (1-x)\log (1-x)$, the binary entropy function. 
\end{theorem}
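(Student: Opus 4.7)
The plan is to establish $\eqkd$-security by splitting it into correctness and secrecy, bounding each, and invoking the triangle inequality $\ecor + \esec \leq \eqkd$. Concretely, I aim to prove $\ecor \leq 2^{-t}$ and $\esec \leq 2\epe(\nu) + \epa(\nu)$, which together with the hypothesis give the result. The three terms in~\eqref{sec_conds} will correspond, in order, to (a) a collision probability of the error-correction hash, (b) a failure probability of the random-sampling test, and (c) the output distance of a two-universal extractor.

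For correctness, Alice's and Bob's output keys $S_A, S_B$ can disagree only on the event that $X' \neq X$ yet $H(X) = H(X')$ so the verification step accepts. Since $H$ is drawn from a two-universal family of hash functions of output length $t$, this collision probability is at most $2^{-t}$, giving $\ecor \leq 2^{-t}$. For secrecy, I will apply the Leftover Hash Lemma in the form: if $H_{\min}^{\eps}(X \mid E') \geq \lambda$, where $E'$ collects Eve's quantum side-information together with all classical public messages (the random sample $V$, the error-correction syndrome $T$ of length $r$, and the hash value of length $t$), then
\begin{equation*}
\esec \leq 2\eps + \tfrac{1}{2}\sqrt{2^{-\lambda + \ell}}.
\end{equation*}
Identifying $\eps = \epe(\nu)$ and $\lambda = n[1 - h_2(\delta + \nu)] - r - t$ then recovers precisely $2\epe(\nu) + \epa(\nu)$, so the task reduces to producing this lower bound on the smooth min-entropy.

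The lower bound is assembled from three ingredients: (i) the Tomamichel--Renner entropic uncertainty relation, which for uniformly chosen incompatible qubit measurements yields $H_{\min}^{\eps}(X \mid E) + H_{\max}^{\eps}(X \mid \tilde{Y}) \geq n$, where $\tilde{Y}$ is the virtual outcome Bob would obtain by measuring his raw-key systems in Alice's basis; (ii) a chain rule which costs at most $r + t$ bits of min-entropy when the public classical messages $T$ and $H(X)$ are subsequently revealed; and (iii) a random-sampling argument upgrading the test outcome $\overline{Z}_{\rm{pe}} \leq \delta$ into $H_{\max}^{\eps}(X \mid \tilde{Y}) \leq n\, h_2(\delta + \nu)$ with smoothing tied to $\epe(\nu)$. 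Step (iii) is the only non-routine piece: uniform sampling of the raw key from both bases symmetrises the errors, so the positions of the test bits behave as a uniformly random subset of size $k$ drawn without replacement from the $m$ error indicators, and a Serfling-type tail bound gives $\Pr[\overline Z_{\rm key} > \delta + \nu \,\wedge\, F_{\rm pe} = \checkmark] \leq \exp(-nk^2\nu^2/[m(k+1)]) = \epe(\nu)$. The main obstacle will be bookkeeping the error budget so that the smoothing parameter in the Leftover Hash bound, the sampling failure probability, and the factor of two all line up to produce exactly $2\epe(\nu)$; this is also what forces $\nu \in (0,1/2 - \delta]$, so that $\delta + \nu \leq 1/2$ and $h_2$ remains monotone on the relevant interval.
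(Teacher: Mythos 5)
Your overall route is the standard one: the paper itself does not prove this theorem but restates it from Tomamichel--Leverrier (Theorems 2 and 3 of Ref.~\cite{Tomamichel2017}), and that proof proceeds exactly as you outline --- correctness from the $2^{-t}$ hash-collision probability, secrecy from the leftover hash lemma with a chain rule costing $r+t$, the entropic uncertainty relation, and a Serfling-type bound turning the test outcome into a smooth max-entropy bound $H_{\max}^{\eps}(X\mid\tilde Y)\leq n\,h_2(\delta+\nu)$.

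There is, however, a genuine gap in your step (iii), and it sits precisely at the point you defer as ``bookkeeping.'' You write that the sampling bound gives $\Pr[\overline Z_{\rm key}\geq\delta+\nu \,\cap\, \overline Z_{\rm pe}\leq\delta]\leq \exp(-nk^2\nu^2/[m(k+1)])=\epe(\nu)$ and then use $\eps=\epe(\nu)$ as the smoothing parameter. Serfling's inequality actually gives the stronger statement the paper records in Eq.~\eqref{oldpe}:
\be
\mathbb{P}_{\rm{pe}}\;\leq\;\exp\!\left(-\frac{2nk^2\nu^2}{m(k+1)}\right)\;=\;\epe(\nu)^2,
\ee
with a factor $2$ in the exponent. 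The reason $\epe(\nu)$ appears in the theorem \emph{without} that factor is that the smoothing parameter of the smooth entropies is not the failure probability itself but (up to normalization by the passing probability, which cancels against the conditioning in the secrecy criterion) its \emph{square root}: smoothing is measured in purified distance, and a classical bad event of probability $p$ only yields a nearby state within purified distance of order $\sqrt{p}$. So the correct chain is $\mathbb{P}_{\rm pe}\leq\epe(\nu)^2$, hence $\eps=\sqrt{\mathbb{P}_{\rm pe}/p_{\rm pass}}\leq\epe(\nu)/\sqrt{p_{\rm pass}}$, which after the cancellation produces the $2\epe(\nu)$ term. As written, your argument either misquotes Serfling (dropping the factor $2$) or identifies the smoothing parameter with the probability rather than its square root; neither version legitimately reproduces the theorem's error term, and with the correct Serfling exponent your identification would instead (unjustifiably) yield $2\epe(\nu)^2$. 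Fixing the proof requires making this probability-to-purified-distance conversion, together with the $p_{\rm pass}$ accounting, explicit.
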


To illustrate the physical meaning (and importance) of $\nu$, we first recall the parameter estimation step. There, Alice and Bob have to check if $\overline{Z}_{\rm{pe}}:=|V \oplus W|/k  \leq \delta$. If it is true, they proceed to the next step; otherwise they abort the protocol. The whole point of this exercise is to ensure that $\overline{Z}_{\rm{key}}:=|X \oplus Y|/n$ is very close to $\delta$ with very high probability. More formally, what we want is an exponential tail bound on the following event: $\overline{Z}_{\rm{pe}} \leq \delta \cap \overline{Z}_{\rm{key}} \geq \delta + \nu$. This describes the bad event that the parameter estimation step is successful (i.e., $F_{\rm{pe}}=\checkmark$) and that the error rate of the raw key pair exceeds $\delta$ by some constant positive term $\nu$. In Refs.~\cite{Tomamichel2012a, Tomamichel2017}, it has been shown that the probability of this event, $\mathbb{P}_{\rm{pe}}:=\Pr\left[ \overline{Z}_{\rm{pe}} \leq \delta \cap \overline{Z}_{\rm{key}} \geq \delta + \nu\right]$, is upper bounded by
\be \label{oldpe}
\mathbb{P}_{\rm{pe}}\leq  \exp{\left(-\frac{2nk^2\nu^2}{m(k+1)}\right)}= \epe(\nu)^2,
\ee 
which is based on Serfling's seminal work in probability inequalities for random sampling without replacement~\cite[Corollary 1.1]{serfling1974}.

As one can see from the above, the deviation term $\nu$ plays an important role in maximizing $\ell$: a smaller deviation term $\nu$ means a larger secret key. However, $\nu$ cannot be arbitrarily small---this would make $\epe(\nu)$ very large (which in turn is constrained by the security parameter, $\eqkd$). Evidently, there is a delicate interplay between these error functions and it would pay off very well here if we can further sharpen Eq.~\eqref{oldpe}. In the following, we present a new random sampling probability bound for parameter estimation, which may be of independent interest to other problems as well.

\begin{lemma}[New sampling bound]\label{lem:samplingbound}~Let $\overline{Z}_{\rm{key}}$ and $\overline{Z}_{\rm{pe}}$ be defined as above. Then for any $\nu> \xi >0$ such that $m(\delta+\xi) \in \mathbb{Z}^+$ and $n^2 (\nu-\xi)^2>1$, 
\be
\mathbb{P}_{\rm{pe}}  \leq \exp{\left( -\frac{2mk \xi^2}{n+1} \right)} + \exp{\left( -2\Gamma_{m(\delta+\xi)}((n \nu')^2 -1) \right) },
\ee where $\nu':=\nu-\xi$ and
\[
\Gamma_{m(\delta+\xi)  }:= \frac{1}{ m(\delta+\xi)  +1}+\frac{1}{m- m(\delta+\xi) +1}.
\] 
\end{lemma}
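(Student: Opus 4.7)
The plan is to split the failure event by how the total number of errors in the block, $M := n\,\overline{Z}_{\rm key} + k\,\overline{Z}_{\rm pe}$, compares to the integer threshold $M^\star := m(\delta+\xi)$ (integrality is exactly the hypothesis $m(\delta+\xi)\in\mathbb{Z}^+$). A union bound, after dropping one conjunct from each resulting intersection, gives
\begin{equation*}
\mathbb{P}_{\rm pe} \leq \Pr[\overline{Z}_{\rm pe}\leq \delta,\,M\geq M^\star] + \Pr[\overline{Z}_{\rm key}\geq \delta+\nu,\,M< M^\star].
\end{equation*}
These two pieces correspond to qualitatively different failure modes of the random sample and would be bounded by two different tail inequalities for sampling without replacement: classical Serfling on the parameter-estimation sample, and a sharpened hypergeometric tail bound on the raw key conditional on $M$.

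For the first piece, on $\{M\geq M^\star\}$ the population error rate $\overline{Z}:=M/m$ satisfies $\overline{Z}\geq \delta+\xi$, so $\overline{Z}_{\rm pe}\leq \delta$ forces the size-$k$ sample mean to undershoot $\overline{Z}$ by at least $\xi$. Viewing $V\oplus W$ as a uniform size-$k$ sample drawn without replacement from the $m$ error indicators, Serfling's inequality (Corollary~1.1 of Ref.~\cite{serfling1974}) applied to $\overline{Z}_{\rm pe}$ yields
\begin{equation*}
\Pr[\overline{Z}-\overline{Z}_{\rm pe}\geq \xi] \leq \exp\!\left(-\frac{2k\xi^{2}}{1-(k-1)/m}\right) = \exp\!\left(-\frac{2mk\xi^{2}}{n+1}\right),
\end{equation*}
which is precisely the first exponential in the lemma, using only the identity $m-k+1=n+1$.

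For the second piece I would condition on $M=j$. Given $M=j$, the raw-key error count $n\,\overline{Z}_{\rm key}$ follows the hypergeometric distribution $\mathrm{Hyp}(m,j,n)$, whose upper tail is monotone nondecreasing in $j$; hence every $j<M^\star$ is dominated by $j=M^\star$. Noting that the mean at $M^\star$ equals $nM^\star/m = n(\delta+\xi)$ and that $n(\delta+\nu)=n(\delta+\xi)+n\nu'$, the problem reduces to proving the sharpened hypergeometric upper-tail estimate
\begin{equation*}
\Pr[\mathrm{Hyp}(m,M^\star,n)\geq n(\delta+\xi)+n\nu'] \leq \exp\!\bigl(-2\Gamma_{M^\star}((n\nu')^{2}-1)\bigr).
\end{equation*}

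The crux, and main obstacle, is this hypergeometric inequality: it is strictly sharper than what Serfling or a Chernoff--Hoeffding bound for the binomial approximation would give, because it exploits the extra concentration of sampling without replacement (the $(m-n)/(m-1)$ finite-population correction to the variance, absent from the binomial). My plan is to estimate the PMF $p(j)=\binom{M^\star}{j}\binom{m-M^\star}{n-j}/\binom{m}{n}$ directly via the two-sided Stirling bounds on the factorials, so that $\log p(j)$ decomposes cleanly into an entropic part plus controlled $\log\sqrt{2\pi N}$ and $1/(12N)$ corrections. A quadratic Taylor expansion of the entropic part about the mean $nM^\star/m$ produces a penalty whose curvature matches $\Gamma_{M^\star}=1/(M^\star+1)+1/(m-M^\star+1)$ (the reciprocal variance-like quantity built from the two class sizes $M^\star$ and $m-M^\star$); the bounded Stirling slack is then absorbed into the $-1$ shift inside $(n\nu')^{2}-1$, which is why the hypothesis $(n\nu')^{2}>1$ is needed to keep the exponent negative. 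Log-concavity of $p$ finally turns the single-term estimate into a geometric series, closing the tail bound.
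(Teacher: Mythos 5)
Your decomposition and the treatment of the first term coincide with the paper's proof: the paper also splits $\mathbb{P}_{\rm pe}$ according to whether the block error rate exceeds the sample error rate by $\xi$ (equivalent, after dropping conjuncts, to your split at $M^\star=m(\delta+\xi)$), bounds that piece by Serfling's lower-tail inequality to get $\exp(-2mk\xi^2/(n+1))$, and reduces the second piece, by conditioning on the total error count and using monotonicity of the hypergeometric upper tail, to the single tail probability $\Pr[\overline{Z}_{\rm key}\geq (\delta+\xi)+\nu'\,|\,Z_{\rm blk}=m(\delta+\xi)]$. Up to that point your plan is sound and essentially identical to the paper's.

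The genuine gap is the last step. The inequality you still need,
\[
\Pr\bigl[\mathrm{Hyp}(m,M^\star,n)\geq nM^\star/m + n\nu'\bigr]\leq \exp\bigl(-2\Gamma_{M^\star}((n\nu')^2-1)\bigr),
\]
is exactly the Hush--Scovel concentration bound for the hypergeometric distribution (with $\Gamma_{M^\star}$ the weaker of the two terms in their $\max$), which the paper simply invokes as a known result; you instead propose to re-derive it via Stirling estimates, a quadratic expansion of the entropic exponent, and log-concavity, and this sketch does not constitute a proof. Two concrete obstacles: (i) the ``$-1$'' shift you plan to hide the error terms in only buys a multiplicative factor $e^{2\Gamma_{M^\star}}$, and in the relevant regime $\Gamma_{M^\star}=1/(M^\star+1)+1/(m-M^\star+1)\ll 1$ (e.g.\ $\approx 7\times 10^{-3}$ for the Micius parameters), which is nowhere near enough room to absorb $\sqrt{2\pi N}$-type Stirling prefactors and the constant from summing the tail; (ii) for $M^\star\ll m/2$ the hypergeometric upper tail is right-skewed, so a second-order Taylor expansion of the entropic part is not automatically an upper bound on the exponent, and matching the specific curvature $1/(M^\star+1)+1/(m-M^\star+1)$ (with its $+1$'s) uniformly in the parameters is precisely the nontrivial content of the Hush--Scovel theorem. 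As written, then, the core of the lemma is assumed rather than proven; citing Hush and Scovel (2005), as the paper does, closes the gap immediately, whereas carrying out your Stirling route would amount to reproving their result and would need substantially more care than the sketch suggests.
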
 
\begin{proof}
To start with, let $Z_{\rm{blk}}=\sum_{i=1}^m Z_i$ be the total number of errors in the initial block of length $m$ and $\overline{Z}_{\rm{blk}}=Z_{\rm{blk}}/m$ be the average error. Note that in QKD this information about the total error is not known to the users; we are modeling it here only for technical reasons. With this, we can write $\Pr\left[ \overline{Z}_{\rm{pe}} \leq \delta \cap \overline{Z}_{\rm{key}} \geq \delta + \nu\right]$ as a sum of $\Pr\left[ \overline{Z}_{\rm{pe}} \leq \delta \cap \overline{Z}_{\rm{key}} \geq \delta + \nu \cap\overline{Z}_{\rm{blk}}\geq  \overline{Z}_{\rm{pe}}+\xi \right]$ and $\Pr\left[ \overline{Z}_{\rm{pe}} \leq \delta \cap \overline{Z}_{\rm{key}} \geq \delta + \nu \cap\overline{Z}_{\rm{blk}} <  \overline{Z}_{\rm{pe}}+\xi \right]$ for some $\xi>0$. The first term can be upper bounded using
\begin{multline*}
\Pr\left[ \overline{Z}_{\rm{pe}} \leq \delta \cap \overline{Z}_{\rm{key}} \geq \delta + \nu \cap\overline{Z}_{\rm{blk}}\geq  \overline{Z}_{\rm{pe}}+\xi \right]\\ \leq
 \Pr\left[\overline{Z}_{\rm{blk}}\geq  \overline{Z}_{\rm{pe}}+\xi \right] \leq \exp{\left( -\frac{2mk \xi^2}{n+1} \right)},
\end{multline*} where the second inequality is implied from Ref~\cite[Corollary 1.1]{serfling1974} (lower tail instead of upper tail). 

For the second term, we have that $\Pr[ \overline{Z}_{\rm{pe}} \leq \delta \cap \overline{Z}_{\rm{key}} \geq \delta + \nu \cap\overline{Z}_{\rm{blk}} <  \overline{Z}_{\rm{pe}}+\xi ] \leq \Pr[ \overline{Z}_{\rm{key}} \geq \delta + \nu \cap\overline{Z}_{\rm{blk}} <  \delta+\xi ]$ as the latter event is implied from the former event, and $\Pr[ \overline{Z}_{\rm{key}} \geq \delta + \nu \cap\overline{Z}_{\rm{blk}} <  \delta+\xi ] \leq \Pr[ \overline{Z}_{\rm{key}} \geq \delta + \nu \cap {Z}_{\rm{blk}} <  \lceil m(\delta+\xi) \rceil ]=\Pr[ \overline{Z}_{\rm{key}} \geq \delta + \nu \cap {Z}_{\rm{blk}} <  m_{\rm{err}} ]$, where $m_{\rm{err}}:=\lceil m(\delta+\xi) \rceil $. Generally, we may assume $m_{\rm{err}}= m(\delta+\xi)$, i.e., $\delta+\xi$ is chosen such that $m(\delta+\xi)$ is a positive integer. 

The above can be further written as $\Pr[ \overline{Z}_{\rm{key}} \geq \delta + \nu \cap {Z}_{\rm{blk}} <  m_{\rm{err}} ]
\leq \sum_{t=0}^{m_{\rm{err}}} \Pr[ \overline{Z}_{\rm{key}} \geq \delta + \nu \cap Z_{\rm{blk}}= t ]= \sum_{t=0}^{m_{\rm{err}}} \Pr[ Z_{\rm{blk}} =t ] \Pr[ \overline{Z}_{\rm{key}}  \geq \delta + \nu | Z_{\rm{blk}}  = t ]$. An upper bound can be obtained by noting that
$\sum_{t=0}^{m_{\rm{err}}} \Pr[ Z_{\rm{blk}} =t ] \Pr[ \overline{Z}_{\rm{key}}  \geq \delta + \nu | Z_{\rm{blk}}  = t ]
 \leq  \Pr\left[ \overline{Z}_{\rm{key}}\geq \delta + \nu |{Z}_{\rm{blk}} =m_{\rm{err}} \right]$. Notice that the last term is given by the upper tail probability of the hypergeometric distribution given the total number of errors ${Z}_{\rm{blk}} =m_{\rm{err}}$ is fixed, i.e., $\Pr[ \overline{Z}_{\rm{key}}\geq \delta + \nu | {Z}_{\rm{blk}} =m_{\rm{err}}]= \Pr[ \overline{Z}_{\rm{key}} \geq m_{\rm{err}}/m + (\nu-\xi) |{Z}_{\rm{blk}} =m_{\rm{err}} ]$. In the ideal case, the exact value of the upper tail probability can be computed (since the total number of errors is tightly bounded now), but this may introduce numerical inaccuracies due to the discrete nature of the distribution~\cite{Trong1993,Berkopec2007}. To that end, we use instead a probability inequality by Hush and Scovel~\cite{HS2005}, which provides pretty tight exponential bounds for hypergeometric distribution (at least in the settings for which we are interested in). This inequality reads
\begin{multline*}
\Pr[ \overline{Z}_{\rm{key}} \geq m_{\rm{err}}/m + (\nu-\xi) |{Z}_{\rm{blk}} =m_{\rm{err}}] \\\leq \exp{\left( -2\alpha_{m_{\rm{err}}}(n^2 (\nu-\xi)^2 -1) \right)},  
\end{multline*}
where \[
\alpha_{m_{\rm{err}}}:= \max \left\{ \frac{1}{n+1}+\frac{1}{k+1}, \frac{1}{m_{\rm{err}}+1}+\frac{1}{m-m_{\rm{err}}+1} \right\}.
\] Note that it is important for $\nu>\xi$ since the deviation term must be positive; more precisely, the sampling should satisfy $n^2 (\nu-\xi)^2>1$. We can in fact further simplify the bound by picking the right-side-hand term in $\alpha_{m_{\rm{err}}}$, i.e., $\exp{( -2\alpha_{m_{\rm{err}}}(n^2 (\nu-\xi)^2 -1) )} \leq \exp{( -2\Gamma_{m_{\rm{err}}}(n^2 (\nu-\xi)^2 -1))}$, where $ \Gamma_{m_{\rm{err}}}:= 1/(m_{\rm{err}} +1)+ 1/(m-m_{\rm{err}} +1)$. It can be verified that $\Gamma_{m_{\rm{err}}}$ decreases with increasing $m_{\rm{err}}$ and hence 
\[ \Pr[ \overline{Z}_{\rm{key}}\geq \delta + \nu | {Z}_{\rm{blk}} =m_{\rm{err}}] \leq e^{-2\Gamma_{m_{\rm{err}}}((n\nu')^2-1)}.\]
where $\nu':=\nu-\xi$ is a positive constant.
\end{proof}

\begin{figure}[t!]
\includegraphics[width=8.64cm]{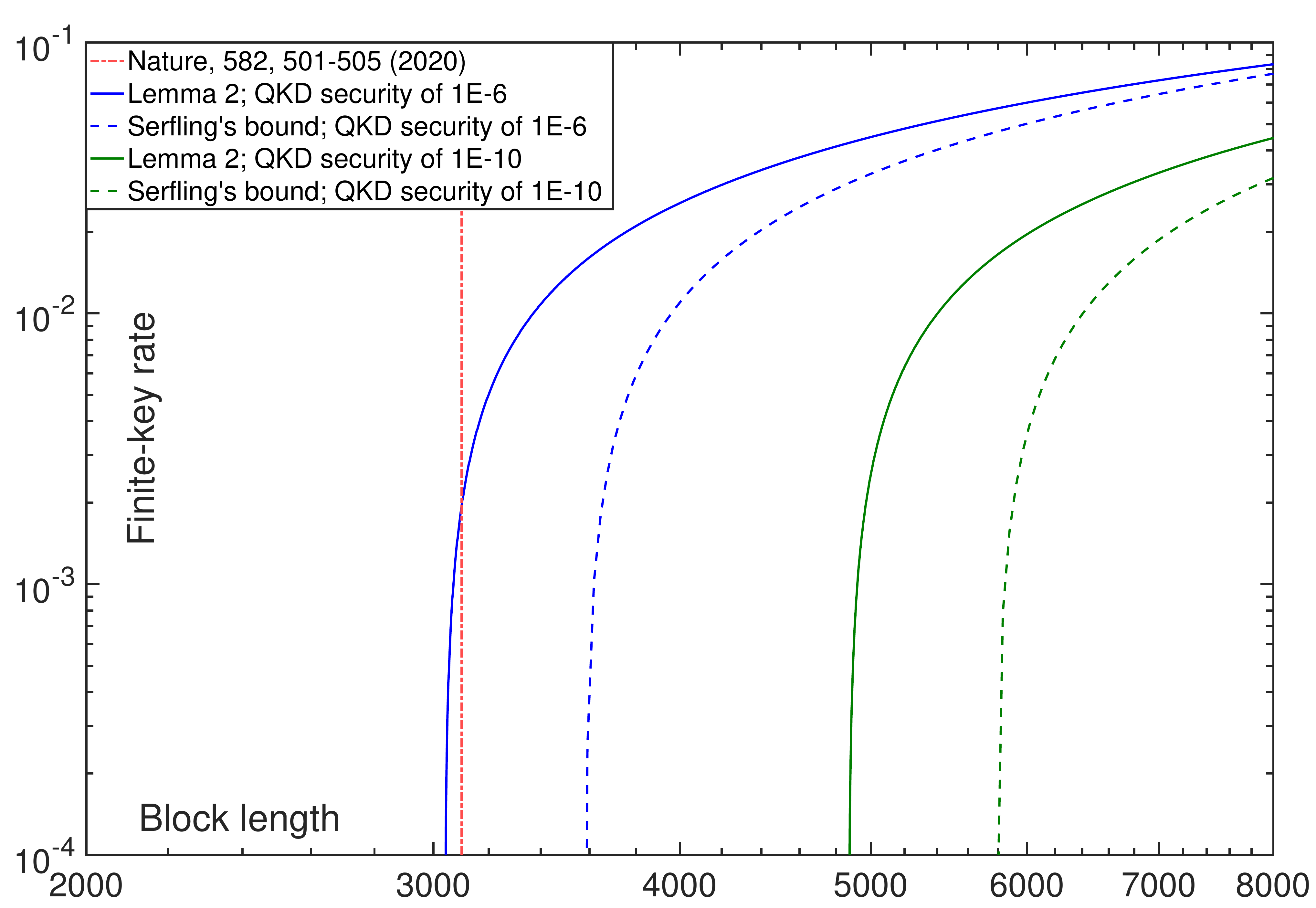}\caption{Numerically optimized finite-key rate $\ell/m$ vs block length $m$ for $s=6,10$. In this simulation, the tolerated error rate is set to $\delta=4.51\%$ Ref.~\cite{Yin2020}. The (red) vertical line represents the block length ($m=3100$ bits) obtained in the Micius experiment, which gives a finite-key rate of $1.962\times 10^{-3}$  based on $\eqkd=10^{-6}$. This suggests that about 6 secret bits can be generated in a single run of the protocol. The numerically optimized parameters are $ \vec{x}_{\rm{opt}} = \{1.962\times 10^{-3},0.5, 0.1141,0.0693\}$.} \label{fig1}
\end{figure}

Using Lemma.~\ref{lem:samplingbound}, the error function for parameter estimation now reads
\begin{multline*}
\epe'(\nu,\xi)^2 \\= \exp{\left( -\frac{2mk \xi^2}{n+1} \right)} + \exp{\left( -2\Gamma_{m(\delta+\xi)}((n \nu')^2 -1) \right) }. 
\end{multline*} To maximize the secret key length, $\ell$, we employ the following program parameterized by a bounded set of four-dimensional real vector, $\vec{x}=(\alpha,\beta,\nu,\xi)$. The block length $m$, tolerated error rate $\delta$, correctness error $2^{-t} = 10^{-(s+2)}$, and security parameter $\eqkd=10^{-s}$ are fixed.
\begin{eqnarray*}
& \underset{ {\vec{x}\in \mathbb{R}^4}}{\max}&  \quad \ell=
\lfloor \alpha m \rfloor \\
&\rm{s.t.}& \quad 2^{-t} + 2\epe'(\nu,\xi) + \epa(\nu)\leq\eqkd,\\
&&\quad \alpha \in [0,1], \beta \in (0,1/2],\\ 
&& \quad 0 < \xi < \nu <1/2-\delta,\\
\end{eqnarray*}
where $k=\lfloor \beta m \rfloor$ is the number of bits allocated to parameter estimation and $r=1.19 h_2(\delta)$ is the expected error correction leakage. Here, $\alpha$ is the secret key rate, which is defined as the number of secret bits generated divided by the block length, i.e., $\ell/m$. For comparison, we also run the same optimization for Theorem.~\ref{thm:sec}, which uses Eq.~\eqref{oldpe} for parameter estimation. From Figure~\ref{fig1}, we observe that the new analysis outperforms the existing analysis for both security levels, $s=6,10$: in particular, the minimum block length $m$ has been reduced by about $14\%$ to $17\%$. However, this improvement is not enough to obtain finite-key security for $\eqkd=10^{-10}$ as reported by Ref.~\cite{Yin2020}; for that, we would need $m > 4800$ (for the new analysis) and $m > 5800$ (for the one based on Eq.~\eqref{oldpe}). We note that one can still obtain finite-key security using Eq.~\eqref{oldpe}; however the smallest security error that we can find for a positive key is $\eqkd=10^{-5}$. 

{\it{Discussion and conclusion.---}}In this application study, we find that the improved Micius Satellite (arguably the most advanced QKD technology in space) would have to revise their security parameter from $\eqkd=10^{-10}$ to $10^{-6}$. This means that the overall security error is four orders of magnitude bigger than reported. This revision immediately raises the following question: Is $\eqkd=10^{-6}$ good enough for space-based QKD? Unfortunately, the answer to this question is not trivial and may entail discussions beyond the scope of this work. For the start, we emphasize that the security statement $\eqkd \leq 10^{-6}$ does not lend any concrete description to how the QKD system could have gone wrong. It is simply an upper bound on the probability that a bad event happens~\cite{PR14}; note that a QKD system which produces non-identical secret keys with probability $10^{-6}$ has the same security as one that gives the entire secret key to Eve with probability $10^{-6}$. We note that one could also use the adversary's guessing probability as a metric to help determine a \emph{good} value of $\eqkd$, e.g., see Ref.~\cite{Wang2020}.

Nevertheless, we argue that one can at least refer to \emph{key streaming} for some guidance. In this setting, we consider the use of a QKD system to generate a continuous stream of secret bits, where some portion of the earlier bits are used for authentication in subsequent rounds~\cite{MR09}. By composability, the security of the key stream is then $\eps_{\rm{stream}} \leq  v\eqkd$, where $v$ is the number of times the users intend to operate the QKD system. With this, we now have a clear connection to how many times the users can operate the QKD system before it needs to be rebooted with a fresh secret key. To appreciate this connection, suppose $\eps_{\rm{stream}}$ has to respect some recommended security level, say $10^{-5}$, which is specified by some national standards organization. Then, it is clear that for some fixed $\eqkd$ (determined by the system parameters), the number of times one can operate the QKD system, $v$, is limited. Applying this example to the improved Micius QKD satellite, we thus have $v \leq 10$, i.e., not more than 10 rounds of QKD operation. This suggests that the improved Micius QKD satellite may have to be rebooted (with a fresh secret key for authentication) after every 10 QKD cycles, or equivalently, the generation of 60 secret bits.

In conclusion, we have presented an improved finite-key analysis for BBM92 QKD protocol with small block length. While  the refined analysis gives pretty good improvements over existing methods, the problem of finite-key security with small block length is still a pressing one. This problem, as we have highlighted above, is especially relevant to long-range QKD systems such as satellite-based QKD, which will most likely form the backbone of the future quantum internet. 

{\it{Acknowledgements.--}}~We thank Valerio Scarani, Marco Tomamichel, Ernest Tan, Ignatius William Primaatmaja, Wang Chao, Yuan Cao, Juan Yin, Yu-Zhe Zhang, and Alexander Ling for their comments and suggestions. C. C.-W. Lim. is supported by the National Research Foundation (NRF) Singapore, under its NRF Fellowship programme (NRFF11-2019-0001) and Quantum Engineering Programme 1.0 (QEP-P2). F. Xu and J.-W. Pan acknowledge support by the National Key Research and Development (R\&D) Plan of China (2018YFB0504300), the National Natural Science Foundation of China, the Anhui Initiative in Quantum Information Technologies, the Shanghai Municipal Science and Technology Major Project (2019SHZDZX01).

\bibliography{references}

\end{document}